\documentclass[11pt]{article}
\usepackage{a4wide}
\usepackage{amsmath}
\usepackage[dvips]{graphics,color}
\usepackage{epsfig}
\usepackage{amssymb}
\usepackage{amsthm}
\def\ci{\perp\kern-.5em\perp}

\def\re{\mathbb{R}}
\def\ne{\mathbb{N}}
\def\var{\text{Var}}
\def\cov{\text{Cov}}
\def\E{\mathbb{E}}
\newtheorem{dfn}{Definition}
\newtheorem{thm}{Theorem}

\begin{document}

\title{Posterior mean and variance approximation for regression and time series problems}

\author{K. Triantafyllopoulos\footnote{Department of
Probability and Statistics, Hicks Building, University of Sheffield,
Sheffield S3 7RH, UK, email: {\tt
k.triantafyllopoulos@sheffield.ac.uk}} \and P.J.
Harrison\footnote{University of Warwick, Coventry, UK}}

\date{\today}

\maketitle

\begin{abstract}
This paper develops a methodology for approximating the posterior
first two moments of the posterior distribution in Bayesian
inference. Partially specified probability models, which are defined
only by specifying means and variances, are constructed based upon
second-order conditional independence, in order to facilitate
posterior updating and prediction of required distributional
quantities. Such models are formulated particularly for multivariate
regression and time series analysis with unknown observational
variance-covariance components. The similarities and differences of
these models with the Bayes linear approach are established. Several
subclasses of important models, including regression and time series
models with errors following multivariate $t$, inverted multivariate
$t$ and Wishart distributions, are discussed in detail. Two
numerical examples consisting of simulated data and of US investment
and change in inventory data illustrate the proposed methodology.

\textit{Some key words:} Bayesian inference, conditional
independence, regression, time series, Bayes linear methods, state
space models, dynamic linear models, Kalman filter, Bayesian
forecasting.
\end{abstract}

\section{Introduction}\label{s1}

Regression and time series problems are important problems of
statistical inference, which appear widely in many science fields,
as for example in econometrics and in medicine. Regression has been
discussed in many textbooks (Mardia {\it et al.}, 1979, Chapter 6;
Srivastava and Sen, 1990); from a Bayesian standpoint Tiao and
Zellner (1964), Box and Tiao (1973), Mouchart and Simar (1984), Pilz
(1986), Leonard and Hsu (1999, Chapter 5) and O'Hagan and Forster
(2004, Chapter 9) discuss a variety of parametric regression models,
where the residuals follow normal or Student $t$ distributions.
Recent work on non-normal responses includes regression models in
the type of generalized linear models (GLMs) (McCullagh and Nelder,
1989) and time series models in the type of dynamic GLMs (Fahrmeir
and Kaufmann, 1987, 1991; Fahrmeir, 1992; West and Harrison, 1997,
Chapter 12; Fahrmeir and Tutz, 2001, Chapter 8; Kedem and Fokianos,
2002; Godolphin and Triantafyllopoulos, 2006). Hartigan (1969) and
Goldstein (1976) develop Bayesian inference for a general class of
linear regression problems, in which the parameters or states of the
regression equation are estimated by minimizing the posterior
expected risk. Goldstein (1979, 1983), Wilkinson and Goldstein
(1996) and Wilkinson (1997) propose modifications to the Bayes
linear estimators to allow for variance estimation in regression and
time series problems. Such considerations are useful in practice
because they allow inference to a range of problems that otherwise
the modeller would need to resort to Monte Carlo estimation
(Gamerman, 1997) or to other simulation based methods (Kitagawa and
Gersch, 1996). West and Harrison (1997, Chapter 4) and Wilkinson
(1997) discuss how the above mentioned regression estimation can be
applied to a sequential estimation problem, which is necessary to
consider in time series analysis.

In this paper we propose a modelling framework that allows
approximate calculation of the first two moments of the posterior
distribution in Bayesian inference. This is motivated by situations
when a model may be partially specified in terms of its first two
moments, or its probability distribution may be difficult to specify
(or it may be specified with uncertainty). Partially specified prior
posterior (PSPP) models are developed for dynamic situation in which
a modeller is reluctant to specify a full probability model and yet
requires a facility for approximate prior/posterior updating on mean
and variance/covariance components of that model. The basic idea is
that a linear function $\phi(X,Y)$ of two random vectors, $X,Y$, is
second-order independent of the observed value of $Y$. Then in
learning, no matter what value of $Y$ is observed, the mean and the
variance of $\phi(X,Y)$ takes exactly the same value. A further
requirement is that the mean and variance of $X|Y=y$ can be deduced
by the mean and variance of $\phi(X,Y)$. We show that for a class of
regression models, linear Bayes methods are equivalent to PSPP,
while we describe situations where PSPP can provide more effective
estimation procedures than linear Bayes. We then describe two wide
classes of regression and time series models, the scaled
observational precision (SOP) and the generalized SOP, both of which
are aimed at multivariate application. For the former model, we give
the correspondence of PSPP (based on specification of prior means
and variances only) with the normal/gamma model (based on
specification of the prior distribution as normal/gamma). For the
latter model, we show that PSPP can produce efficient estimation,
overcoming problems of existing time series models. This relates to
covariance estimation for multivariate state space models when the
observation covariance matrix is unknown. For this interesting model
we present two numerical illustrations, consisting of simulated
bivariate data and of US investment and change in inventory data.

The paper is organized as follows. PSPP models are defined in
Section \ref{s3}. Sections \ref{s4s4} and \ref{s4s4a} apply PSPP
modelling to regression and time series problems. The numerical
illustrations are given in Section \ref{data}. Section \ref{s5}
gives concluding comments and the appendix details the proof of a
theorem of Section \ref{s3}.

\section{Partially specified probability modelling}\label{s3}

\subsection{Full probability modelling}

In Bayesian analysis, a full probability model for a random vector
$Z$ comprises the joint distribution of all its elements. The
forecast distribution of any function of $Z$ is then just that
function's marginal distribution. Learning or updating simply
derives the conditional distribution of $Z$ given the received
information on the appropriate function of $Z$. For example, let
$Z=[X'~Y']'$, where $X,Y$ are real valued random vectors, and the
probability density function of $Z$ be denoted by $p(.)$. $X$ will
often be the vector comprising the parameters or states of the model
and $Y$ will be the vector comprising the observations of interest.
The model is precisely defined, if a density of $Y$ given $X$ is
specified, e.g. $p(Y|X)$ so that $p(y|X)$ is the likelihood function
of $X$ based on the single observation $Y=y$. Then the one-step
forecast distribution of $Y$ is the marginal distribution of $Y$
\begin{equation}\label{integral}
p(Y)=\int_\mathcal{S} p(X,Y)\,dX,
\end{equation}
where $\mathcal{S}$ is the space of $X$, also known as parametric
space. When the value $y$ of $Y$ is observed, the revised density of
$X$ is
\begin{equation}\label{update:bayes}
p(X|Y=y)=\frac{p(y|X)p(X)}{p(y)},
\end{equation}
from direct application of the Bayes theorem.

Most Bayesian parametric regression and time series models
(including linear and non-linear) adopt the above model structure
and their inference involves the evaluation of integral
(\ref{integral}) and the Bayes rule (\ref{update:bayes}).

However, in many situations, the evaluation of the above integral is
not obtained in closed form and the application of rule
(\ref{update:bayes}) does not lead to a conjugate analysis, which is
usually desirable in a sequential setting such as for time series
application. For such situations, it is desirable to approximate
only the mean and variance of $X|Y=y$. In this paper we consider the
general problem of obtaining approximations of the first two moments
of $X|Y=y$, when we only specify the first two moments of $X$ and
$Y$ alone and not their joint distribution. We achieve this by
replacing the full conditional independence structure, which is
based on the joint distribution of $X$ and $Y$, by second order
independence, which is based on means and variances of $X$ and $Y$.
Our motivation is generated from the Gaussian case; suppose that $X$
and $Y$ have a joint normal distribution, then $X-A_{xy}Y$ and $Y$
are mutually independent and the distribution of $X|Y=y$ can be
derived from the distribution of $X-A_{xy}Y$, where $A_{xy}$ is the
regression matrix of $X$ on $Y$ (for a definition of $A_{xy}$ see
Section \ref{s3s1}). So we can define a subclass of the Bayesian
models of (\ref{integral}) and (\ref{update:bayes}), where we can
replace the strict mutual independence requirement by second order
independence. Details appear in our definition of prior posterior
probability models that follow.

\subsection{Posterior mean and variance approximation}\label{s3s1}

Let $X\in\re^ m$, $Y\in\re^p$, $W\in\re^q$ be any random vectors
with a joint distribution $(m,p,q\in\ne-\{0\})$. We use the notation
$\E(X)$ for the mean vector of $X$, $\var(X)$ for the covariance
matrix of $X$ and $\textrm{Cov}(X,Y)$ for the covariance matrix of
$X$ and $Y$. We use the notation $X\bot_2Y$ to indicate that $X$ and
$Y$ are second order independent, i.e. $\E(X|Y=y)=\E(X)$ and
$\var(X|Y=y)=\var(X)$, for any value $y$ of $Y$. Furthermore, we use
the notation $X\bot_2 W|Y$ to indicate that, given $Y$, $X$ and $W$
are second order independent, i.e. $\E(X|W=w,Y=y)=\E(X|Y=y)$ and
$\var(X|W= w,Y=y)=\var(X|Y=y)$. Details on conditional independence
can be found in Whittaker (1990) or Lauritzen (1996), who discuss
independence in a much more sophisticated level necessary for the
development of graphical models.

Considering vectors $X$ and $Y$ as above, it is well known that
$X-A_{xy}Y$ and $Y$ are uncorrelated, where
$A_{xy}=\cov(X,Y)\{\var(Y)\}^{-1}$ is the regression matrix of $X$
on $Y$. In order to obtain approximations of the posterior mean
$\E(X|Y=y)$ and the posterior covariance matrix $\var(X|Y=y)$ it is
necessary to go one step further and assume that
\begin{equation}\label{ass1}
X-A_{xy}Y\bot_2Y,
\end{equation}
which of course implies that $X-A_{xy}Y$ and $Y$ are uncorrelated.
With $\mu_x=\E(X)$ and $\mu_y=\E(Y)$, the prior means of $X$ and
$Y$, respectively, the above assumption is equivalent to the
following two postulates.
\begin{enumerate}
\item Given $Y$, the posterior mean $\E(X-A_{xy}Y|Y=y)$ of
$X-A_{xy}Y$ does not depend on the value of $y$ of $Y$, so that the
value of this mean must be the same for all values of $Y$, and so be
equal to its prior expectation $\mu_x-A_{xy}\mu_y$.
\item Given $Y$, the posterior covariance matrix $\var(X-A_{xy}Y|Y=y)$ of $X-A_{xy}Y$
does not depend on the value $y$ of $Y$, so that this posterior
covariance matrix takes the same value for all values $y$ of $Y$ and
is necessarily equal to its prior covariance matrix
$\var(X-A_{xy}Y)$.
\end{enumerate}
Thus it is possible to approximate $\E(X|Y=y)$ and $\var(X|Y=y)$,
since from the definition of second order independence (given
above), we have
\begin{gather*}
\E(X-A_{xy}Y|Y=y)=\E(X-A_{xy}Y) \Rightarrow
\E(X|Y=y)-A_{xy}y=\mu_x-A_{xy}\mu_y \\ \Rightarrow
\E(X|Y=y)=\mu_x-A_{xy}(y-\mu_y), \\ \var(X|Y=y) =
\var(X-A_{xy}Y|Y=y)=\var(X-A_{xy}Y) \\ =\Sigma_x+A_{xy}\Sigma_y
A_{xy}' -2\cov(X,Y)A_{xy}' = \Sigma_x-A_{xy}\Sigma_yA_{xy}'
\end{gather*}
and so we write
\begin{displaymath}
X|Y=y\sim \{\mu_x+A_{xy}(y-\mu_y),\Sigma_x-A_{xy}\Sigma_yA_{xy}'\},
\end{displaymath}
where $\Sigma_x=\var(X)$ and $\Sigma_y=\var(Y)$.

Therefore we can define models that have a prior/posterior updating
facility that is based on second order independence and that can
approximate the posterior mean and variance obtained from an
application of the Bayes theorem when the full distributions are
specified. Thus we have the following definition.
\begin{dfn}\label{df1}
Let $X$ and $Y$ be any vectors of dimensions $m$ and $p$
respectively and assume that it exists the joint distribution of
$Z=[X'~Y']'$. Let $A_{xy}$ be the regression matrix of $X$ on $Y$. A
first order partially specified prior posterior probability model
for $(X;Y)$ (notation: PSPP(1)), is defined such that: (a)
$X-A_{xy}Y\bot_2 Y$ and (b) for any value $y$ of $Y$, the mean
vector and the covariance matrix of $X|Y=y$ are obtainable from the
mean vector and the covariance matrix of $X-A_{xy}Y$.
\end{dfn}
We note that if $X$ and $Y$ have a joint normal distribution, then
second order independence is guaranteed and in particular
$X-A_{xy}Y$ and $Y$ are mutually independent, which is much stronger
than property (\ref{ass1}). In this case $\E(X|Y=y)$ and
$\var(X|Y=y)$ are the exact posterior moments, produced by an
application of Bayes rule (\ref{update:bayes}). It follows that the
approximation of the first two moments reflects on the approximation
of postulate (\ref{ass1}). Thus the approximations of $\E(X|Y=y)$
and $\var(X|Y=y)$ will be so accurate as the condition (\ref{ass1})
is satisfied. The question is: as we depart from normality, how
justified are we to apply (\ref{ass1})? In order to answer this
question and to support the adoption of (\ref{ass1}), we give the
next result, which states that Bayes linear estimation is equivalent
to mean and variance estimation employing assumption (\ref{ass1}).

\begin{thm}\label{th1}
Consider the vectors $X$ and $Y$ as above. Under quadratic loss,
$\mu_x+A_{xy}(Y-\mu_y)$ is the Bayes linear estimator if and only if
$X-A_{xy}Y\bot_2Y$.
\end{thm}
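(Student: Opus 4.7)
The plan is to read the theorem as asserting that the linear rule $\hat X(Y) = \mu_x + A_{xy}(Y - \mu_y)$, together with its companion covariance $\Sigma_x - A_{xy}\Sigma_y A_{xy}'$, coincides with the posterior moments $(\E(X|Y),\var(X|Y))$---and hence is the Bayes-optimal estimator under quadratic loss---if and only if the residual $X - A_{xy}Y$ is second-order independent of $Y$. Both directions reduce to rearranging the identities
\[
\E(X - A_{xy}Y|Y=y) = \E(X|Y=y) - A_{xy}y, \qquad \var(X - A_{xy}Y|Y=y) = \var(X|Y=y),
\]
together with the prior identity $\var(X - A_{xy}Y) = \Sigma_x - A_{xy}\Sigma_y A_{xy}'$, which follows from $\cov(X,Y) = A_{xy}\Sigma_y$.

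For $(\Leftarrow)$, I assume $X - A_{xy}Y \perp_2 Y$. Second-order independence gives $\E(X - A_{xy}Y|Y=y) = \mu_x - A_{xy}\mu_y$ and $\var(X - A_{xy}Y|Y=y) = \Sigma_x - A_{xy}\Sigma_y A_{xy}'$, both $y$-independent. Adding $A_{xy}y$ to the first yields $\E(X|Y=y) = \mu_x + A_{xy}(y - \mu_y)$, and the variance identity yields $\var(X|Y=y) = \Sigma_x - A_{xy}\Sigma_y A_{xy}'$, so $\mu_x + A_{xy}(Y - \mu_y)$ is the quadratic-loss-optimal estimator.

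For $(\Rightarrow)$, I assume the quadratic-loss Bayes estimator equals $\mu_x + A_{xy}(Y - \mu_y)$ with companion variance $\Sigma_x - A_{xy}\Sigma_y A_{xy}'$. Subtracting $A_{xy}y$ from $\E(X|Y=y) = \mu_x + A_{xy}(y - \mu_y)$ gives $\E(X - A_{xy}Y|Y=y) = \mu_x - A_{xy}\mu_y$, a constant in $y$ that equals $\E(X - A_{xy}Y)$; this is the first-moment half of $\perp_2$. For the second-moment half, $\var(X - A_{xy}Y|Y=y) = \var(X|Y=y) = \Sigma_x - A_{xy}\Sigma_y A_{xy}' = \var(X - A_{xy}Y)$, again independent of $y$, so (\ref{ass1}) follows in full.

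The main obstacle is interpretive rather than technical: ``Bayes linear estimator'' is unambiguous in the Bayes linear literature as the affine minimizer of $\E\|X - a - BY\|^2$, which is always $\mu_x + A_{xy}(Y - \mu_y)$ whatever the joint distribution, so the iff is substantive only once one insists that this estimator agree with the \emph{unrestricted} Bayes optimum and carry the matching posterior variance. I would fix this stronger reading at the outset, after which the equivalence reduces entirely to the moment bookkeeping above with no further input.
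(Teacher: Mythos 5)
Your proof is correct and follows essentially the same route as the paper's: both directions reduce to translating between the conditional moments of $X$ and of the residual $X-A_{xy}Y$, under the same (necessary) strong reading that the linear rule coincides with the true posterior mean and carries the constant adjusted variance $\Sigma_x-A_{xy}\Sigma_yA_{xy}'$ as the true posterior variance. The only cosmetic difference is in the converse: the paper first deduces only that $\E(X-A_{xy}Y|Y)$ is some constant and then identifies it by noting that $\E(X|Y)$, being linear, must coincide with the optimal affine rule, whereas you pin the constant down directly from the definition of $\bot_2$ as the prior mean $\mu_x-A_{xy}\mu_y$; both are valid.
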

The proof of this result is given in the appendix. Thus, if one is
happy to accept the assumptions of Bayes linear optimality, she has
to employ (\ref{ass1}). Next we give three illustrative examples
that show assumption (\ref{ass1}) may be approximately satisfied.

\subsection*{Example A: checking postulate (\ref{ass1}) for the multivariate
Student $t$ distribution}

Let $X\in\re^m$ and $Y\in\re^p$ be random vectors with a joint
Student $t$ distribution with $n$ degrees of freedom (Gupta and
Nagar, 1999, \S4.2). For example the marginal density of $X$ is the
Student $t$ distribution $X\sim \mathcal{T}_m(n,\mu_x,C_{11})$ with
density function
$$
p(X)=\frac{\pi^{-p/2}n^{n/2}\Gamma\{(n+p)/2\}}{\Gamma(n/2)
 |C_{11}|^{1/2}} \left\{ n + (X-\mu_x)'C_{11}^{-1}(X-\mu_x)
\right\}^{(n+p)/2},
$$
for $\mu_x=\E(X)$ and $\var(X)=nC_{11}/(n-2)$, where $\Gamma(.)$
denotes the gamma function and $|\cdot|$ denotes determinant.

Write
\begin{displaymath}
Z=\left[\begin{array}{c}  X\\ Y\end{array}\right]\sim \mathcal{T}_{m+p}\left\{n,\left[\begin{array}{c}  \mu_x\\
\mu_y\end{array}\right],\left[\begin{array}{cc} \ C_{11} & C_{12}\\
C_{12} & C_{22}\end{array}\right]\right\},
\end{displaymath}
for some known parameters $\mu_x$, $\mu_y$, $C_{11}$, $C_{12}$, and
$C_{22}$. The regression coefficient of $X$ on $Y$ is
$A_{xy}=C_{12}C_{22}^{-1}$ so that
\begin{displaymath}
\left[\begin{array}{c}  X-A_{xy}Y\\ Y\end{array}\right]\sim
\mathcal{T}_{m+p}
\left\{n,\left[\begin{array}{c}  \mu_x-A_{xy}\mu_y\\
\mu_y\end{array}\right],\left[\begin{array}{cc} \
C_{11}-A_{xy}C_{22}A_{xy}' & 0\\ 0 &
C_{22}\end{array}\right]\right\}.
\end{displaymath}
Now for any value $y$ of $Y$, the conditional distribution of
$X-A_{xy}Y$ given $Y=y$ is
\begin{displaymath}
X-A_{xy}Y|Y=y\sim
\mathcal{T}_m\left\{n+p,\mu_x-A_{xy}\mu_y,(C_{11}-A_{xy}C_{22}A_{xy}')
\left[1+n^{-1}(y-\mu_y)C_{22}^{-1}(y-\mu_y)'\right]\right\}.
\end{displaymath}
Thus for any $n>0$, $\E(X-A_{xy}Y|Y=y)=\E(X-A_{xy}Y)$, while for the
variance, for $n>2$, it is $\mbox{Var}(X-A_{xy}Y|Y=y)\approx
n(n-2)^{-1}(C_{11}-A_{xy}C_{22}A_{xy}')=\mbox{Var}(X-A_{xy}Y)$. For
large $n$ postulate $X-A_{xy}Y\bot_2Y$ is thought to be
satisfactory.

\subsection*{Example B: checking postulate (\ref{ass1}) for the inverted multivariate Student
$t$ distribution}

The inverted Student $t$ distribution is discussed in Dickey (1967),
in Gupta and Nagar (1999, \S4.4) and it is generated from a
multivariate normal and a Wishart distribution as follows. Suppose
that $X^*\sim\mathcal{N}_p(0,I_p)$ and $\Sigma\sim
\mathcal{W}_p(n+p-1,I_p)$, for some $n>0$, where
$\mathcal{W}_p(n+p-1,I_p)$ denotes a Wishart distribution with
$n+p-1$ degrees of freedom and parameter matrix $I_p$; this
distribution belongs to the orthogonally invariant and residual
independent family of distributions, discussed in Khatrie {\it et
al.} (1991) and Gupta and Nagar (1999, \S9.5). For a vector $\mu$
and a covariance matrix $C$ we define
$X=n^{1/2}C^{1/2}\{\Sigma+X^*(X^*)'\}^{-1/2}X^*+\mu$, where
$C^{1/2}$ denotes the symmetric square root of $C$. Then the density
of $X$ is
$$
p(X)=\frac{\Gamma\{(n+p)/2\}}{\pi^{p/2}\Gamma(n/2)|C|^{1/2}
n^{(p+n-2)/2}} \left\{n-(X-\mu)'C^{-1}(X-\mu) \right\}^{n/2-1}.
$$
This density defines the inverted multivariate Student $t$
distribution and the notation used is $X\sim
\mathcal{IT}_p(n,\mu,C)$.

Following a similar thinking as in Example A we have that
$$
X-A_{xy}Y\sim \mathcal{IT}_m(n,\mu_x-A_{xy}\mu_y,
C_{11}-A_{xy}C_{22} A_{xy}')
$$
and conditioning on $Y=y$ (Gupta and Nagar, 1999, \S4.4) we obtain
$$
X-A_{xy}Y|Y=y \sim \mathcal{IT}_m\{n, \mu_x-A_{xy}\mu_y,
(C_{11}-A_{xy}C_{22}A_{xy}')[1-n^{-1}(y-\mu_y)'C_{22}^{-1}(y-\mu_y)]\}.
$$
So we conclude that for large $n$ the mean and variance of
$X-A_{xy}Y|Y=y$ and $X-A_{xy}Y$ are approximately the same and thus
$X-A_{xy}Y\bot_2Y$.

\subsection*{Example C: checking postulate (\ref{ass1}) for the Wishart distribution}

Suppose that $\Sigma=(\Sigma_{i,j})_{i,j=1,2}$ follows a Wishart
distribution $\Sigma\sim\mathcal{W}_2(n,S)$ with density
$$
p(\Sigma)= \left\{2^n\Gamma_2(n/2)|S|^{n/2}\right\}^{-1}
|\Sigma|^{(n-3)/2} \textrm{exp}
\left\{-\frac{1}{2}\textrm{tr}(S^{-1} \Sigma) \right\},
$$
where $\exp(.)$ denotes exponent, $\textrm{tr}(.)$ denotes the trace
of a square matrix, $S=(S_{ij})_{i,j=1,2}$, $n>0$ are the degrees of
freedom and $\Gamma_2(x)=\sqrt{\pi}\Gamma(x)\Gamma(x-1/2)$ denotes
the bivariate gamma function. Let $X=\Sigma_{12}$ and
$Y=\Sigma_{22}$ and assume that we observe $Y=y$ so that
$\E(Y)=nS_{22}\approx y$. From the expected values of the Wishart
distribution (Gupta and Nagar, 1999, \S3.3.6), we can write
$$
\left[\begin{array}{c} X \\ Y\end{array}\right] \sim \left\{
n\left[\begin{array}{c} S_{12} \\ S_{22} \end{array} \right], n
\left[ \begin{array}{cc} S_{11}S_{22}+S_{12}^2 & 2S_{12}S_{22} \\
2S_{12}S_{22} & 2S_{22}^2 \end{array} \right] \right\},
$$
which, with $A_{xy}=S_{12}/S_{22}$, yields $\E(X-A_{xy}Y)=0$ and
$\var(X-A_{xy}Y)=n(S_{11}S_{22}-S_{12}^2)$.

From Gupta and Nagar (1999, \S3.3.4), the posterior distribution of
$X|Y=y$ is $X|Y=y\sim \mathcal{N}\{S_{12}y/S_{22},
(S_{11}-S_{12}^2/S_{22})y\}$ leading to
$\E(X-A_{xy}Y|Y=y)=0=E(X-A_{xy}Y)$ and
$\var(X-A_{xy}Y|Y=y)=\var(X|Y=y)=(S_{11}-S_{12}^2/S_{22})y=
(S_{11}S_{22}-S_{12}^2)y/S_{22}=\var(X-A_{xy}Y)$. Thus we can
establish that $X-A_{xy}Y\bot_2Y$. \\

Examples A and B show that PSPP(1) modelling can be regraded as
approximation to the true posterior mean and variance, corresponding
to the full probability model assuming the distribution of these
examples.

Returning to Definition \ref{df1}, there are situations where the
prior mean vectors and covariance matrices of $X$ and $Y$ are
available, conditional on some other parameters, the typical example
being when the moments of $X$ and $Y$ are given conditional on a
covariance matrix $V$. Then, as $V$ is usually unknown, the purpose
of the study is to approximate the posterior mean vector and
covariance matrix of $X|Y=y$ as well as to approximate the posterior
mean vector and covariance matrix of $V$. In such situations
postulate (\ref{ass1}) reads $X-A_{xy}Y\bot_2Y|V$ and another
postulate for $V$ is necessary in order to approximate the moments
of $X|Y=y$, unconditionally of $V$. Regression problems of this kind
are met frequently in practice, as $V$ can represent an observation
variance or volatility, which estimation is beneficial to accounting
for the uncertainty of predictions. We can then extend Definition
\ref{df1} to accommodate for the estimation of $V$.

\begin{dfn}\label{df2}
Let $X$, $V$ and $Y$ be any vectors of dimensions $m$, $r$ and $p$
respectively and assume that it exists the joint distribution of
$Z=[X'~V'~Y']'$. Let $A_{xy}$ be the regression matrix of $X$ on
$Y$, given $V$ and let $B_{vy}$ the regression matrix of $V$ on $Y$.
A second order partially specified prior posterior probability model
for $(X,V;Y)$ (notation: PSPP(2)), is defined such that: (a)
$X-A_{xy}Y\bot_2 Y|V$ and $V-B_{vy}Y\bot_2 Y$ and (b) for any value
$y$ of $Y$, the mean vector and the covariance matrix of $X|V,Y=y$
and $V|Y=y$ are obtainable from the mean vector and the covariance
matrices of $X-A_{xy}Y$ and $V-B_{vy}Y$, respectively.
\end{dfn}

An example of PSPP(2) model is the scaled observational precision
model, which is examined in detail in Sections \ref{s4s4} and
\ref{s4s4a}. Next we discuss the differences of PSPP(2) and Bayes
linear estimation when $V$ is a scalar variance.

Goldstein (1979, 1983), Wilkinson and Goldstein (1996) and Wilkinson
(1997) examine some variants of this problem by considering variance
modifications of the basic linear Bayes rule, considered in Hartigan
(1969) and in Goldstein (1976). Below we give a basic description of
the proposed estimators and we indicate the similarities and the
differences of the proposed PSPP models and of the Bayes linear
estimators. Consider a simple regression problem formulated as
$Y|X,V\sim (X,V)$, $X\sim \{\E(X),\var(X)\}$, where $Y$ is a scalar
response variable, $X$ is a scalar regressor variable and $\E(X)$,
$\var(X)$ are the prior mean and variance of $X$. If $V$ is known
the posterior mean $\E(X|V,Y=y)$ can be approximated by the Bayes
linear rule
\begin{equation}\label{brule1}
\mu=\frac{\E(X)V+y\var(X)}{V+\var(X)} = \E(X) +A_{xy}\{y-\E(X)\},
\end{equation}
with related posterior expected risk
$$
R(\mu)=\frac{\var(X)V}{\var(X)+V}=\var(X)(1-A_{xy}),
$$
where $A_{xy}=\var(X)/\{\var(X)+V\}$ is the regression coefficient
of $X$ on $Y$, conditional on $V$. As it is well known $R(\mu)$ is
the minimum posterior expected risk, over all linear estimators for
$\E(X|Y=y)$, and in this sense $\mu$ attains Bayes linear
optimality. If one assumes that the distributions of $Y|X,V$ and $X$
are normal distributions, then $\mu$ gives the exact posterior mean
$\E(X|V,Y=y)$ and $R(\mu)$ gives the exact posterior variance
$\var(X|V,Y=y)$. However, in practice in many problems, $V$ is not
known, and ideally the modeller wishes to estimate $V$ and provide
an approximation to the mean and variance of $X|Y=y$,
unconditionally of $V$. Suppose that in addition to the above
modelling assumptions, in order to estimate $V$, a prior mean
$\E(V)$ and prior variance $\var(V)$ of $V$ are specified, namely
$V\sim \{(\E(V),\var(V)\}$. Goldstein (1979, 1983) suggest to
estimate $V$ with the Bayes linear rule
\begin{equation}\label{brule2}
V^*=\frac{\E(V) \var(Y^*) + y^* \var(V)}{\var(Y^*)+\var(V)},
\end{equation}
where $y^*$ is an observation from $Y^*$, a statistic that is
unbiased for $V$, and $\var(Y^*)$ is specified \emph{a priori}. Then
the Bayes rule $\mu$ is replaced by the rule $\mu^{*}$, where $V$ in
$\mu$ is replaced by its estimate $V^*$. One can see that the
revised regression matrix $A_{xy}^*$ becomes
$$
A_{xy}^*=\frac{\var(X)}{\var(X)+V^*}= \frac{ \var(X)\var(Y^*) +
\var(X) \var(V) } { \var(X) \var(Y^*) + \var(X) \var(V) + \E(V)
\var(Y^*) + y^* \var(V) }
$$
and so the variance modified Bayes rule for $\E(X|Y=y)$ is
$\mu^{*}=\E(X) +A_{xy}^*\{y-\E(X)\}$.

From Theorem \ref{th1}, it is evident that the Bayes rule
(\ref{brule1}) is equivalent to $X-A_{xy}Y\bot_2Y|V$. The Bayes rule
(\ref{brule2}) corresponds to the postulate $V-B_{vy}Y\bot_2 Y$,
although the latter does not establish the equivalence of the PSPP
models and Bayes linear estimation methods, since it can be verified
that $\mu^{*}$ and $V^*$ are not the same as in the PSPP modelling
approach (see Section \ref{s4s4}). In addition, the roles of $Y^*$
and $y^*$ are not fully understood; for example one question is how
$y$ and $y^*$ are related and how one can determine $y^*$ from $y$,
especially when $y$ is a vector of observations. The main problem
experienced in the variance modified Bayes linear estimator
$\mu^{*}$ is that the related expected risk $R(\mu^{*})$ can not
easily be determined and the work in this direction (Goldstein,
1979, 1983) has led to either intuitive evaluation for $R(\mu^{*})$
or it has led to imposing even more restrictions to the model in
order to obtain an analytic formula for $R(\mu^{*})$. Although, both
of these approaches can work in regression problems, they are not
appropriate for time series problems, where sequential updating is
required and thus an accurate evaluation of that risk is necessary.
On the other hand the PSPP approach combines the two postulates,
$X-A_{xy}Y\bot_2Y|V$ and $V-B_{vy}Y\bot_2 Y$, using conditional
expectations. It should be noted that the PSPP treatment is free of
most of the assumptions made to the variance modified Bayes linear
system so that approximate estimation of the posterior $\var(X|Y)$
be given. The PSPP models are developed mainly for multivariate
regression and time series problems and they are aimed to situations
that either a fully Bayesian model is not available, or
computationally intensive calculations, such as Monte Carlo methods,
are undesirable, or a model can only be specified via means and
variances.

\section{The scaled observational precision model}\label{s4s4}

\subsection{Main theory}

The scaled observational precision (SOP) model is a conjugate
regression model, which illustrates the normal dynamic linear model
with observational variances, see for example West and Harrison
(1997, \S4.5). This model is widely used in practice because it is
capable to handle the practical problem of unknown observation
variances. Here we construct a PSPP(2) model and we compare it with
the usual conjugate SOP model.

Let $V$ be a scalar variance, $X\in\re^m$, $Y\in\re^p$ with
\begin{displaymath}
Z=\left[\begin{array}{c}  X\\
Y\end{array}\right]\Bigg|V\sim\left\{\left[\begin{array}{c}  \mu_x\\
\mu_y\end{array}\right],V\left[\begin{array}{cc} \
\Sigma_x & A_{xy}\Sigma_y\\
A_{yx}\Sigma_x & \Sigma_y\end{array}\right]\right\},
\end{displaymath}
for some known $\mu_x$, $\mu_y$, $\Sigma_x$ and $\Sigma_y$.

Assuming $X-A_{xy}Y\bot_2Y|V$, the partially specified posterior
is
\begin{displaymath}
X|V,Y=y\sim
\{\mu_x+A_{xy}(y-\mu_y),V(\Sigma_x-A_{xy}\Sigma_yA_{xy}')\}.
\end{displaymath}
Let $T$ be a, generally non-linear, function of $Y$, often taken
as
\begin{displaymath}
T=(Y-\mu_y)'\Sigma_y^{-1}(Y-\mu_y).
\end{displaymath}
Define $K$ to be a $\alpha$ times the variance of $T|V$, for some
$\alpha>0$, and $A_{v\tau}$ to be the regression coefficient of $V$
on $T$, conditional on $K$. We assume $V-A_{v \tau}T\bot_2Y,K$ with
forecast
\begin{displaymath}
T|V,K\sim (V,K/\alpha)\quad\textrm{and}\quad\textrm{Cov}
(T,V|K)=\textrm{Var} (V|K),
\end{displaymath}
where $V|K\sim (\widehat{V},K/\eta )$, which is $\eta /\alpha$ times
as precise as the conditional distribution of T, for some known
$\widehat{V},\alpha,\eta$, with
\begin{displaymath}
\left[\begin{array}{c}  V\\ T\end{array}\right]\Bigg|K\sim
\left\{\left[\begin{array}{c}  \widehat{V}\\
\widehat{V}\end{array}\right],\frac{K}{\eta}\left[\begin{array}{cc}
\ 1 & 1\\ 1 & (\eta +\alpha)/\alpha\end{array}\right]\right\}.
\end{displaymath}
Given the observation $T=\tau$, and using $V-A_{v \tau}T\bot_2Y,K$
with $A_{v \tau}=\alpha/(\eta+\alpha)$ we have
\begin{gather*}
\E(V|K,T=\tau)=\E(V|K)+\frac{\alpha}{\eta+\alpha}\left[\tau-\E(T|K)\right]=
\frac{\eta\widehat{V}+\alpha\tau}{\eta +\alpha},\\
\var(V|K,T=\tau)=\var(V|T=\tau)-\cov(V,T|K) \{ \var(T|K) \}^{-1}
\cov(T,V|K) \\ =\frac{K}{\eta} - \frac{K^2}{\eta^2}
\frac{\eta\alpha}{ K(\eta+\alpha)} = \frac{K}{\eta} \left( 1-
\frac{\alpha}{\eta+\alpha}\right) = \frac{K}{\eta+\alpha}
\end{gather*}
so that
\begin{gather}
V|K,T=\tau\sim\left(\frac{\eta\widehat{V}+\alpha\tau}{\eta
+\alpha},\frac {K}{\eta +\alpha}\right).\label{var:est:sop}
\end{gather}
Hence using conditional expectations, it follows that
\begin{equation}\label{eq10}
X|Y=y\sim\left\{\mu_x+A_{xy}(y-\mu_y),\frac{\eta\widehat{V}+\alpha\tau}{\eta
+\alpha}(\Sigma_x-A_{xy}\Sigma_yA_{yx})\right\},
\end{equation}
where $\tau=(y-\mu_y)'\Sigma_y^{-1}(y-\mu_y)$.

\subsection{Comparison with the conjugate normal/gamma model}

Now consider the relationship of the above model with standard
normal conjugate models. A typical normal conjugate model with
unknown scalar variance $V$, postulates the distribution of $Z$
given $V$ as
\begin{displaymath}
Z=\left[\begin{array}{c}  X\\
Y\end{array}\right]\Bigg|V\sim \mathcal{N}_{mp}\left\{\left[\begin{array}{c}  \mu_x\\
\mu_y\end{array}\right],V\left[\begin{array}{cc} \
\Sigma_x & A_{xy}\Sigma_y\\
A_{yx}\Sigma_x & \Sigma_y\end{array}\right]\right\},
\end{displaymath}
with the distribution of $V$ as an inverse gamma so that $\nu
s/V\sim\chi_{\nu}^2$. Here $\mathcal{N}_{mp}(.,.)$ denotes the
$mp$-dimensional normal distribution and $\chi^2_{\nu}$ denotes the
chi-squared distribution with $\nu$ degrees of freedom. Writing
$T=(Y-\mu_y)'\Sigma_y^{-1}(Y-\mu_y)$, the conditional distribution
of $T$ given $V$ can be easily derived from the distribution of
$TV^{-1}|V$ which is $TV^{-1}|V\sim\chi_p^2$. Then the posterior
distribution of $V^{-1}$ given $Y=y$ is
$$
p\left(\frac{1}{V}\Big|T=\tau\right) =
 \frac{p(\tau|V)p(1/V)}{p(\tau)}
\propto \left(\frac{1}{V}\right)^{(\nu+p)/2-1}\exp\left(-\frac{\nu s
+\tau}{2V}\right),
$$
from which it is deduced that, given $Y=y$, $(\nu
s+\tau)V^{-1}|Y=y\sim\chi^2_{\nu+p}$. The posterior distribution of
$X|Y=y$ is a multivariate Student $t$ distribution based upon $\nu
+p$ degrees of freedom with
\begin{gather}
X|Y=y\sim\mathcal{T}_m\left\{\nu+p,\mu_x+A_{xy}(y-\mu_y),
\frac{\nu s +\tau}{\nu+p} \left( \Sigma_x-A_{xy}\Sigma_yA_{yx}\right)\right\},\label{eq11}\\
\frac{\nu s+\tau}{V}\Big|Y=y\sim\chi_{\nu +p}^2, \quad
\tau=(y-\mu_y)'\Sigma_y^{-1}(y-\mu_y).\label{eq12}
\end{gather}
Note that, if $\widehat{V}=\nu s/(\nu+p-3),\eta=\nu +p-3$, and
$\alpha=1$, then the posterior mean vector and covariance matrix of
(\ref{eq10}) and (\ref{eq11}) are identical. However, this is not
consistent with the conjugate model since from the prior assumption
$\nu s/V\sim\chi_{\nu}^2$ it is
\begin{displaymath}
\E(V|s)=\frac{\nu s}{\nu -2}\neq\widehat{V},\quad (\nu>2),
\end{displaymath}
for any $p >1$.

If we want to adopt the same prior for $\widehat{V}=\nu s/(\nu-2)$
in both the PSPP and the conjugate models, then the respective
posterior means for $V$ will differ, i.e.
$$
\E(V|Y=y,\textrm{PSPP model})-\E(V|Y=y,\textrm{conjugate
model})=\frac{(p-1)\nu s}{(\nu-2)(\nu+p-2)},
$$
where we have used $\eta=\nu+p-3$ and $\alpha=1$ as before. Note
that if $Y$ is a scalar response, e.g. $p=1$, then the two variance
estimates are identical. So the respective posterior variances of
equations (\ref{eq10}) and (\ref{eq11}) will differ accordingly only
when $p>1$.

From the posterior distribution of $1/V$ we have that
\begin{equation}\label{var:comp:1}
\textrm{Var}(V|Y=y ,\textrm{conjugate model})=\frac{2(\tau + \nu
s)^2}{(\nu + p -2)^2(\nu + p-4)}
\end{equation}
while, from equation (\ref{var:est:sop}), the respective posterior
variance for the PSPP model is
\begin{equation}\label{var:comp:2}
\textrm{Var}(V|K,Y=y ,\textrm{PSPP model})=\frac{K}{\nu + p -2},
\end{equation}
where we have used $\alpha=1$ and $\eta=\nu + p -3$. If we choose
$K=2(\tau + \nu s)^2/\{(\nu + p -2)(\nu + p -4)\}$, then the two
variances will be the same. Note that, irrespectively of the choice
of $K$ (given that $K$ is bounded), as the degrees of freedom $\nu$
tend to infinity, the variances of both equations (\ref{var:comp:1})
and (\ref{var:comp:2}) converge to zero and so as
$\nu\rightarrow\infty$, $V$ concentrates about its mean
asymptotically degenerating.

\subsection{Application to time series modelling I}

The above ideas can be applied to time series modelling when
interest is placed on the estimation of the observation or
measurement variance. Consider, for example, the $p$-dimensional
time series vector $Y_t$, which at a particular time $t$ sets
\begin{equation}\label{timeseries1}
Y_t=B_tX_t+\epsilon_t,\quad \epsilon_t\sim (0,VZ),\quad
X_t=C_tX_{t-1}+\omega_t,\quad \omega_t\sim (0,VW),
\end{equation}
where $B_t$ is a known $p\times m$ design matrix, $C_t$ is a known
$m\times m$ transition matrix and the innovation error sequences
$\{\epsilon_t\}$ and $\{\omega_t\}$ are individually and mutually
uncorrelated. The $p\times p$ and $m\times m$ covariance matrices
$Z$ and $W$ are assumed known, while the scalar variance $V$ is
unknown. Initially we assume
$$
X_0|V\sim (m_0,VP_0)\quad\textrm{and}\quad V \sim
\left(\widehat{V}_0,\frac{K_0}{\eta_0}\right),
$$
for some known $m_0$, $P_0$, $\widehat{V}_0$, $K_0$ and $\eta_0$. It
is also assumed that  \emph{a priori}, $X_0$ is uncorrelated with
$\{\epsilon_t\}$ and $\{\omega_t\}$. Denote with $y^t$ the
information set comprising the observations $y_1,y_2,\ldots,y_t$.
Then the PSPP model described above, applies at each time $t$ with
$\mu_x=C_tm_{t-1}$, $\mu_y=f_t=B_tC_tm_{t-1}$,
$\Sigma_x=R_t=C_tP_{t-1}C_t'+W$ and $\Sigma_y=Q_t=B_tR_t B_t'+Z$,
where $m_{t-1}$ and $P_{t-1}$ are calculated with the same way at
time $t-1$, starting with $t=1$. Given $y^{t-1}$, the regression
matrix of $X_t$ on $Y_t$ is $A_{xy}=A_t=R_t B_t'Q_t^{-1}$, which is
independent of $V$. It follows that $V|y^t\sim
(\widehat{V}_t,K_t/\eta_t)$. With $\alpha=1$, it is $K_t=K_{t-1}$
and $\eta_t=\eta_{t-1}+1$ so that
$$
\eta_t\widehat{V}_t=\eta_{t-1}\widehat{V}_{t-1}+e_t'Q_t^{-1}e_t,
$$
where $e_t'Q_t^{-1}e_t=\tau_t$ and $e_t=y_t-f_t$ is the 1-step
forecast error vector. The above estimate $\widehat{V}_t$
approximates the variance estimate of the conjugate dynamic linear
model (West and Harrison, 1997, \S 4.5), which, assuming a prior
$\eta_{t-1}\widehat{V}_{t-1}V^{-1}|y^{t-1}\sim\chi^2_{\eta_{t-1}}$,
arrives at the posterior
$(\eta_{t-1}\widehat{V}_{t-1}+\tau_t)V^{-1}|y^t\sim\chi^2_{\eta_{t-1}+p}$
so that $\E(V|y^t)=\eta_t\widehat{V}_t/(\eta_t+p-3)\approx
\widehat{V}_t$. The variance of $V|y^t$ in the conjugate model is
$$
\textrm{Var}(V|y^t)=\frac{2\eta_t^2\widehat{V}_t^2}{(\eta_t-2)^2(\eta_t-4)},
$$
whereas the respective variance in the PSPP model is
$\textrm{Var}(V|y^t)=K/\eta_t$, with $K=K_0$. Although these two
variances differ considerably, in the sense that in the conjugate
model the variance of $V|y^t$ is a function of the data $y^t$ and
in the PSPP model the variance of $V|y^t$ is only a function of
time $t$ and on the prior $K_0$, it can be seen that as
$t\rightarrow\infty$, both variances converge to zero and so in
both cases $V|y^t$ concentrates about its mean $\widehat{V}_t$
asymptotically degenerating.

In the PSPP model, the posterior mean vector and covariance matrix
of $X_t|y^t$ are given by $X_t|y^t\sim (m_t,\widehat{V}_tP_t)$,
where $m_t=C_tm_{t-1}+A_te_t$ and $P_t=R_t-A_tQ_tA_t'$. These
approximate the respective mean vector and covariance matrix
produced by the conjugate model, which, under the inverted gamma
prior, results to the posterior Student $t$ distribution:
$X_t|y^t\sim \mathcal{T}_m(\eta_t,m_t,\widehat{V}_tP_t)$.

\section{The generalized observational precision model}\label{s4s4a}

\subsection{Main theory}\label{s4s4atheory}

The generalization of the SOP model of Section \ref{s4s4} when $V$
is a $p\times p$ variance-covariance matrix is not available and
only special forms of conjugate SOP models are known (West and
Harrison, 1997, Chapter 16). The problem is that since the
dimensions of $X$ and $Y$ are different, it is not possible to scale
the covariance matrix of $X|V$ by $V$, because $X$ has dimension $m$
and $V$ is a $p\times p$ matrix. This problem is discussed in detail
in Barbosa and Harrison (1992) and Triantafyllopoulos (2007). Next
we propose a generalization of the SOP model, in which, given $V$,
we avoid to scale the covariance matrices of $X$ and $Y$ by $V$.
This setting is more natural than the setting of the SOP, which
considers the somewhat mathematically convenient variance scaling.

Let $V$ be a $p\times p$ covariance matrix, $X\in\re^m$,
$Y\in\re^p$ with
\begin{displaymath}
Z=\left[\begin{array}{c}  X\\
Y\end{array}\right]\Bigg|V\sim\left\{\left[\begin{array}{c}  \mu_x\\
\mu_y\end{array}\right],\left[\begin{array}{cc} \
\Sigma_x & A_{xy}(\Sigma_y+V)\\
(\Sigma_y+V)A_{xy}' & \Sigma_y+V\end{array}\right]\right\},
\end{displaymath}
for some known $\mu_x$, $\mu_y$, $\Sigma_x$ and $\Sigma_y$, not
depending on $V$. Note that now we cannot gain a scaled precision
model. Even if we assume prior distributions for $Z|V$ and $V$, we
can not obtain the marginal distributions $X|Y=y$ and $V|Y=y$ in
closed form, since the covariance matrices of $X$ and $Y$ are not
scaled by $V$.

Assuming $X-A_{xy}Y\bot_2Y|V$, conditional on $V$, the partially
specified posterior is
\begin{equation}\label{example:mult:eq1}
X|V,Y=y\sim
\{\mu_x+A_{xy}(y-\mu_y),\Sigma_x-A_{xy}(\Sigma_y+V)^{-1}A_{xy}'\}.
\end{equation}

Define $T=(Y-\mu_y)(Y-\mu_y)'-\Sigma_y$ and denote with
$\textrm{vech}(V)$ the column stacking operator of a lower portion
of the symmetric positive definite matrix $V$. Given $V$, the
forecast of $T$ is
\begin{gather*}
\textrm{vech}(T)|V,K\sim
\left\{\textrm{vech}(V),\frac{K}{\alpha}\right\}\quad
\textrm{and}\\
\textrm{Cov}\{\textrm{vech}(V),\textrm{vech}(T)\}=\frac{K}{\eta}=
\textrm{Var}\{\textrm{vech}(V|K)\},
\end{gather*}
where $\alpha$, $\eta$ are known positive scalars and $K$ is a known
$\{p(p+1)/2\}\times \{p(p+1)/2\}$ covariance matrix. With
$\widehat{V}$ the prior estimate of $V$ and $I_{p(p+1)/2}$ the
$\{p(p+1)/2\}\times \{p(p+1)/2\}$ identity matrix, we have
$$
\left[\begin{array}{c} \textrm{vech}(V)\\
\textrm{vech}(T)\end{array}\right]\Bigg| K\sim
\left\{\left[\begin{array}{c} \textrm{vech}(\widehat{V})\\
\textrm{vech}(\widehat{V})\end{array}\right],
\frac{K}{\eta}\left[\begin{array}{cc} I_{p(p+1)/2} & I_{p(p+1)/2}\\
I_{p(p+1)/2} & (\eta
+\alpha)\alpha^{-1}I_{p(p+1)/2}\end{array}\right]\right\}.
$$

The regression matrix of $\textrm{vech}(V)$ on $\textrm{vech}(T)$ is
$A_{v\tau}=\alpha (\eta + \alpha)^{-1}I_{p(p+1)/2}$. Assuming now
that $\textrm{vech}(V)-A_{v\tau}\textrm{vech}(T)\bot_2T|K$ we obtain
the posterior mean and covariance of $V$ as
$$
\E\{\textrm{vech}(V)|K,T=\tau
\}=\textrm{vech}(\widehat{V})+\frac{\alpha}{\eta
+\alpha}\left\{\textrm{vech}(\tau)-\textrm{vech}(\widehat{V})\right\}
$$
and
$$
\textrm{Var}\{\textrm{vech}(V)|K,T=\tau
\}=\textrm{Var}\{\textrm{vech}(V)|K\}+A_{v\tau}\textrm{Var}\{\textrm{vech}(T)|K\}A_{v\tau}'=\frac{K}{\eta
+\alpha}
$$
so that
\begin{equation}\label{example:mult:eq2}
\textrm{vech}(V)|K,T=\tau\sim\left\{\frac{\textrm{vech}(\eta\widehat{V}+\alpha\tau
)}{\eta +\alpha},\frac{K}{\eta +\alpha}\right\},
\end{equation}
from which we see that the posterior mean of $V$ can be written as
$$
\E(V|K,T=\tau)=\widehat{V}+\frac{\alpha}{\eta + \alpha}\left(\tau
-\widehat{V}\right)=\frac{\eta\widehat{V}+\alpha \tau }{\eta +
\alpha}.
$$
We note that in general the regression matrix $A_{xy}$ in
(\ref{example:mult:eq1}) will be a function of $V^{-1}$ and this
adds more complications to the calculation of the mean and
covariance matrix of $X|Y=y$. However, if we impose the assumption
that $\textrm{Cov}(X,Y|V)=A\textrm{Var}(Y)$, where $A$ is a known
$m\times p$ matrix not depending on $V$, then $A_{xy}=A$ is
independent of $V$ and so we get
\begin{equation}\label{example:mult:eq1:case1}
X|Y=y\sim
\left\{\mu_x+A_{xy}(y-\mu_y),\Sigma_x-\frac{1}{\eta+\alpha}A_{xy}\left(\Sigma_y+
\eta\widehat{V}+\alpha \tau\right)A_{xy}'\right\},
\end{equation}
where $\tau=(y-\mu_y)(y-\mu_y)'-\Sigma_y$. Given that $K$ is
bounded, as $\eta\rightarrow\infty$, the covariance matrix of
$\textrm{vech}(V)|K,T=\tau$ converges to the zero matrix and so
$V|K,T=\tau$ concentrates about its mean $\E(V|K,T=\tau)$
asymptotically degenerating. This can be a theoretical validation of
the proposed procedure for the accuracy of the estimator of $V$,
$\E(V|K,T=\tau)=(\eta\widehat{V}+\alpha \tau)/ (\eta + \alpha)$.

\subsection{Application to linear regression modelling}\label{s4s4areg}

A typical linear regression model sets
\begin{equation}\label{lin:mod:1}
Y=BX+\epsilon,\quad \epsilon\sim (0,V),\quad X\sim
(\mu_x,\Sigma_x),
\end{equation}
where $Y$ is a $p$-dimensional vector of response variables, $B$ is
a known $p\times m$ design matrix and $\epsilon$ is a
$p$-dimensional error vector, which is uncorrelated with the random
$m$-dimensional vector $X$. The mean vector $\mu_x$ and the
covariance matrix $\Sigma_x$ are assumed known and
$\Sigma_y=B\Sigma_xB'$ so that $\textrm{Var}(Y)=B\Sigma_xB'+V$. The
covariance matrix of $X$ and $Y$ is $\textrm{Cov}(X,Y)=\Sigma_xB'$
and so the assumption $\textrm{Cov}(X,Y)=A\{\textrm{Var}(Y)\}^{-1}$,
does not hold, since $\textrm{Var}(Y)$ is a function of $V$. Thus
the posterior mean vector and covariance matrix of equation
(\ref{example:mult:eq1:case1}) do not apply, since now $A_{xy}$ is
stochastic in $V$. In order to resolve this difficulty next we
propose an approximation that will allow computation of equation
(\ref{example:mult:eq1}).

In order to proceed, we will need to evaluate
$\E\{(\Sigma_y+V)^{-1}|Y=y\}$ and
$\textrm{Var}\{\textrm{vech}\{(\Sigma_y+V)^{-1}\}|Y=y\}$. Since we
only have equation (\ref{example:mult:eq2}) and we have no
information on the distribution of $V$, we can not obtain the above
mean vector and covariance matrix. Here we choose to adopt an
intuitive approach suggesting that
\begin{eqnarray*}
\widetilde{V}&=&\E\{(\Sigma_y+V)^{-1}|K,T=\tau\}\approx
\{\Sigma_y+E(V|K,T=\tau)\}^{-1} \\
&=& (\eta+\alpha)\left\{(\eta+\alpha )\Sigma_y
+\eta \widehat{V}+\alpha\tau \right\}^{-1},\\
\widetilde{\widetilde{V}}&=&\textrm{Var}[\textrm{vech}
\{(\Sigma_y+V)^{-1}\}|K,T=\tau]\approx\textrm{Var}\{\textrm{vech}
(\Sigma_y+V)|K,T=\tau\}=\frac{K}{\eta +\alpha}.
\end{eqnarray*}
The reasoning of this is as follows. Since
$\lim_{\eta\rightarrow\infty}\textrm{Var}\{\textrm{vech}(V)|K,T=\tau\}=0$,
$V$ concentrates about its mean and so we can write $V\approx
\E(V|K,T=\tau)$, for sufficiently large $\eta$. Then
$(\Sigma_y+V)^{-1}\approx \{\Sigma_y +E(V|K,T=\tau)\}^{-1}$. The
covariance matrix of $\textrm{vech}\{(\Sigma_y+V)^{-1}\}$ has been
set approximately the same with the covariance matrix of
$\textrm{vech}(\Sigma_y+V)$ ensuring that for large $\eta$, both
covariance matrices converge to zero.

The above problem of the specification of $\widetilde{V}$ and
$\widetilde{\widetilde{V}}$ can be generally presented as follows.
Suppose that $M$ is a bounded covariance matrix and assume that
$\E(M)$ and $\var\{\textrm{vech}(M)\}$ are finite and known. The
question is, given only this information, can one obtain
$\E(M^{-1})$ and $\var\{\textrm{vech}(M^{-1})\}$? For example one
can notice that if $M$ follows a Wishart or inverted Wishart
distributions, then $\widetilde{V}$ is approximately true. Formally,
if $M\sim \mathcal{W}_p(n,S)$ ($M$ follows the Wishart distribution
with $n$ degrees of freedom and parameter matrix $S$, see e.g. Gupta
and Nagar, 1999, Chapter 3), we have $\E(M)=nS$ and
$\E(M^{-1})=S^{-1}/(n-p-1)=n\{\E(M)\}^{-1}/(n-p-1)$, which implies
$\E(M^{-1})\approx \{\E(M)\}^{-1}$, for large $n$. If
$M\sim\mathcal{IW}_p(n,S)$ ($M$ follows the inverted Wishart
distribution with $n$ degrees of freedom and parameter matrix $S$,
see e.g. Gupta and Nagar, 1999, Chapter 3), we have
$\E(M)=S/(n-2p-2)$ and so
$\E(M^{-1})=(n-p-1)S^{-1}=(n-p-1)\{\E(M)\}^{-1}/(n-2p-2)$, which
again implies $\E(M^{-1})\approx \{\E(M)\}^{-1}$, for large $n$. Of
course $M$ might not follow Wishart of inverted Wishart
distributions and in many practical situations we will not have
access to the distribution of $M$. For general application we can
verify that $\E(M^{-1})\approx \{\E(M)\}^{-1}$, if and only if $M$
and $M^{-1}$ are uncorrelated. The accuracy of the choice of
$\widetilde{V}$ is reflected on the accuracy of the one-step
predictions, which is illustrated in Section \ref{data1}.

We can now apply conditional expectations to obtain the mean
vector and the covariance matrix of $X|Y=y$. Indeed from the above
and equation (\ref{example:mult:eq1}) we have
$$
\E(X|Y=y)=\mu_x+\E(A_{xy}|Y=y)(y-\mu_y)= \mu_x+\Sigma_xB'
\widetilde{V}(y-\mu_y).
$$
For the covariance matrix $\textrm{Var}(X|Y=y)$ we have
\begin{eqnarray*}
\E\{\textrm{Var}(X|V,Y=y)|Y=y\}&=&\Sigma_x-\Sigma_xB'E\{(\Sigma+V)^{-1}|Y=y\}
B\Sigma_x\\
&=&\Sigma_x-\Sigma_xB'\widetilde{V} B\Sigma_x
\end{eqnarray*}
and
\begin{eqnarray*}
\textrm{Var}\{\E(X|V,Y=y)|Y=y\}&=&\textrm{Var}[\textrm{vec}\{\Sigma_xB'(\Sigma_y
+V)^{-1}(y-\mu_y)\}|Y=y]\\ &=&\{(y-\mu_y)'\otimes\Sigma_x
B'\}G_p\widetilde{\widetilde{V}}G_p'\{(y-\mu_y)\otimes B\Sigma_x\}.
\end{eqnarray*}
where $\otimes$ denotes Kronecker product, $\textrm{vec}(\cdot)$
denotes the column stacking operator of a lower portion of a matrix
and $G_p$ is the duplication matrix, namely
$\textrm{vec}\{(\Sigma_y+V)^{-1}\}=G_p\textrm{vech}\{(\Sigma_y+V)^{-1}\}$.

Thus the mean vector and the covariance matrix of $X|Y=y$ are
\begin{eqnarray}
X|Y=y &\sim &\left\{ \mu_x+\Sigma_xB' \widetilde{V}(y-\mu_y),
\Sigma_x-\Sigma_xB'\widetilde{V} B\Sigma_x\right.\nonumber\\
&&\left. + [(y-\mu_y)'\otimes\Sigma_x
B']G_p\widetilde{\widetilde{V}}G_p'[(y-\mu_y)\otimes
B\Sigma_x]\right\}.\label{est.post1}
\end{eqnarray}
We note that the mean vector and covariance matrix of $X|Y=y$
depend on the estimates $\widetilde{V}$ and
$\widetilde{\widetilde{V}}$. A simple intuitive approach was
employed in this section and next we give an assessment of this
approach by simulation. In general, equation (\ref{est.post1})
holds where $\widetilde{V}$ and $\widetilde{\widetilde{V}}$ are
any estimates of the mean vector and covariance matrix of
$(\Sigma_y+V)^{-1}|Y=y$.

\subsection{Application to time series modelling II}\label{s4s4atseries}

In this section we consider the state space model
(\ref{timeseries1}), but the covariance matrices of the error
drifts $\epsilon_t$ and $\omega_t$ are
$\textrm{Var}(\epsilon_t)=V$ and $\textrm{Var}(\omega_t)=W$. Here
$V$ is an unknown $p\times p$ covariance matrix and $W$ is a known
$m\times m$ covariance matrix. The priors are partially specified
by
$$
X_0\sim (m_0,P_0)\quad\textrm{and}\quad \textrm{vech}(V)\sim\left\{
\textrm{vech}(\widehat{V}_0),\frac{K_0}{\eta_0}\right\},
$$
for some known $m_0$, $P_0$, $\widehat{V}_0$, $K_0$ and $\eta_0$. It
is also assumed that  \emph{a priori}, $X_0$ is uncorrelated with
$\{\epsilon_t\}$ and $\{\omega_t\}$. Note that in contrast with
model (\ref{timeseries1}), the above model is not scaled by $V$ and
in fact any factorization of the covariance matrices by $V$ would
lead to restrictive forms of the model; for a discussion of this
topic see Harvey (1989), Barbosa and Harrison (1992), West and
Harrison, (1997, \S 16.4), and Triantafyllopoulos (2006a, 2007).
Before we give the proposed estimation algorithm, we give a brief
description of the related matrix-variate dynamic models (MV-DLMs)
and the restrictions imposed in these models.

Suppose $\{Y_t\}$ is a $p$-dimensional vector of observations, which
are observed in roughly equal intervals of time $t=1,2,3,\ldots$.
Write $Y_t=[Y_{1t}~Y_{2t}~\cdots~Y_{pt}]'$, where each of $Y_{it}$
is modelled as a univariate dynamic linear model (DLM):
$$
Y_{it}=B_t'X_{it}+\epsilon_{it},\quad
X_{it}=C_tX_{i,t-1}+\omega_{it}, \quad \epsilon_{it}\sim
\mathcal{N}(0,\sigma_{ii}), \quad \omega_{it}\sim \mathcal{N}_m(0,
\sigma_{ii} W_i),
$$
where $B_t$ is an $m$-dimensional design vector, $X_{it}$ is an
$m$-dimensional state vector, $C_t$ is an $m\times m$ transition
matrix and the error drifts $\epsilon_{it}$ and $\omega_{it}$ are
individually and mutually uncorrelated and also they are
uncorrelated with the state prior $X_{i,0}$, which is assumed to
follow the normal distribution
$X_{i,0}\sim\mathcal{N}_m(m_{i,0},P_{i,0})$, for some known
$m_{i,0}$ and $P_{i,0}$. The $m\times m$ covariance matrix $W_i$ is
assumed known and the variances
$\sigma_{11},\sigma_{22},\ldots,\sigma_{pp}$ form the diagonal
elements of the covariance matrix
$\Sigma=(\sigma_{ij})_{i,j=1,2,\ldots,p}$, which is assumed unknown
and it is subject to Bayesian estimation under the inverted Wishart
prior $\Sigma\sim\mathcal{IW}_p(n_0+2p,n_0S_0)$, for some known
$n_0$ and $S_0$. The model can be written in compact form as
\begin{equation}\label{mvdlm}
Y_t'=B_t'X_t+\epsilon_t',\quad X_t=C_tX_{t-1}+\omega_t,\quad
\epsilon_t\sim\mathcal{N}_p(0,\Sigma),\quad
\textrm{vec}(\omega_t)\sim \mathcal{N}_{mp}(0,\Sigma\otimes W),
\end{equation}
where $B_t'=[B_{1t}'~B_{2t}'~\cdots~B_{pt}']$,
$X_t=[X_{1t}~X_{2t}~\cdots~X_{pt}]$,
$C_t=\textrm{diag}(C_{1t},C_{2t},\ldots,C_{pt})$,
$\textrm{vec}(X_0)\sim\mathcal{N}_{mp}\{\textrm{vec}(m_0),\Sigma\otimes
P_0\}$, for $m_0=[m_{1,0}~m_{2,0}~\cdots~m_{p,0}]$ and
$P_0=\textrm{diag}(P_{1,0},P_{2,0},\ldots,P_{p,0})$. Model
(\ref{mvdlm}) is termed as matrix-variate dynamic linear model
(MV-DLM) and it is studied in Quintana and West (1987, 1988), Smith
(1992), West and Harrison (1997, Chapter 16) Triantafyllopoulos and
Pikoulas (2002), Salvador {\it et al.} (2003, 2004), Salvador and
Gargallo (2004), and Triantafyllopoulos (2006a, 2006b); Harvey
(1986, 1989) develop a similar model where $\Sigma$ is estimated by
a quasi likelihood estimation procedure. The disadvantage of model
(\ref{mvdlm}) is that $Y_{1t},Y_{2t},\ldots,Y_{pt}$ are restricted
to follow similar patterns since the model components $B_t$ and
$C_t$ are common for all $i=1,2,\ldots,p$. One can notice that the
only difference between $Y_{it}$ and $Y_{jt}$ $(i\neq j)$, is due to
the error drifts $\epsilon_{it}$, $\omega_{it}$ and $\epsilon_{jt}$,
$\omega_{jt}$. Thus, for example, model (\ref{mvdlm}) is not
appropriate to model $Y_t=[Y_{1t}~Y_{2t}]'$, where $Y_{1t}$ is a
trend time series and $Y_{2t}$ is a seasonal time series. It follows
that when there are structural changes between $Y_{it}$ and
$Y_{jt}$, the MV-DLM might be thought of as restrictive and
inappropriate model and its use is not recommended. When $p$ is
large one can hardly justify the ``similarity'' of
$Y_{1t},Y_{2t},\ldots,Y_{pt}$. We believe that in practice the
popularity of the MV-DLM is driven from its mathematical properties
(fully Bayesian conjugate estimation procedures for sequential
forecasting and filtering/smoothing), rather than from a data driven
analysis. Although we accept that in some cases the MV-DLM can be a
useful model, we would submit that in many time series problems this
model is unjustifiable and the above discussion expresses our
reluctance in suggesting the MV-DLM for general use for multivariate
time series problems.

Returning now to the PSPP dynamic model, denote with $y^t$ the
information set comprising data $y_1,y_2,\ldots,y_t$. If at time
$t-1$ the posteriors are partially specified by $X_{t-1}|y^{t-1}\sim
(m_{t-1},P_{t-1})$ and $\textrm{vech}(V)|y^{t-1}\sim
\{\textrm{vech}(\widehat{V}_{t-1}),$ $\eta_{t-1}^{-1}K_{t-1}\}$, for
some known $m_{t-1}$, $P_{t-1}$, $\widehat{V}_{t-1}$, $K_{t-1}$ and
$\eta_{t-1}$, then by direct application of the theory of Section
\ref{s4s4a} we have for time $t$: $\mu_x=C_tm_{t-1}$,
$\Sigma_x=R_t=C_tP_{t-1}C_t'+W$, $\mu_y=f_t=B_tC_tm_{t-1}$,
$\Sigma_y=B_tR_tB_t'$ and $A_{xy}=A_t=R_tB_t'(B_tR_tB_t'+V)^{-1}$.
The 1-step ahead forecast covariance matrix is
$Q_t=\var(Y_t|y^t)=B_tR_tB_t'+\widehat{V}_{t-1}$ and so we have
$Y_t|y^{t-1}\sim (f_t,Q_t)$. Given $Y_t=y_t$, the error vector is
$e_t=y_t-f_t$ and so the posterior mean of $V|y^t$ is
$$
\eta_t\widehat{V}_t=\eta_{t-1}\widehat{V}_{t-1}+e_te_t'-B_tR_tB_t',
$$
where we have used $\alpha=1$. Thus it is
$$
\textrm{vech}(V)|y^t\sim
\left\{\textrm{vech}(\widehat{V}_t),\frac{K_t}{\eta_t}\right\},
$$
where $\eta_t=\eta_{t-1}+1$ and $K_t=K_{t-1}$. It follows that
$K_t=K_0$ and therefore as $t\rightarrow\infty$, $V|y^t$
concentrates about $\widehat{V}_t$ asymptotically degenerating. By
observing that $B_tR_tB_t'=Q_t-\widehat{V}_{t-1}$ and writing the
updating of $\widehat{V}_t$ recurrently, we get
$$
\widehat{V}_t=\widehat{V}_{t-1}+\frac{e_te_t'-Q_t}{\eta_t}=\widehat{V}_0
+ \sum_{i=1}^t \frac{e_ie_i'-Q_i}{\eta_0+i}.
$$
By forming now the standardized 1-step ahead forecast errors
$e_t^*=Q_t^{-1/2}e_t$, where $Q_t^{-1/2}$ denotes the symmetric
square root of $Q_t^{-1}$, one can obtain a measure of goodness of
fit, since $e_t^*\sim (0, I_p)$. This can easily be implemented, by
checking whether the mean of
$e_1^*(e_1^*)',e_2^*(e_2^*)',\ldots,e_t^*(e_t^*)'$ is close to $I_p$
or equivalently by checking that, for
$e_t^*=[e_{1t}^*~e_{2t}^*~\cdots~e_{pt}^*]'$, the mean of each
$(e_{i,1}^*)^2,(e_{i,2}^*)^2,\ldots,(e_{it}^*)^2$ is close to 1 and
$e_{it}^*$ is uncorrelated with $e_{jt}^*$, for all $t$ and $i\neq
j$.

Applying the procedure adopted in linear regression, we have that
the posterior mean vector and covariance matrix are given by
$X_t|y^t\sim (m_t,P_t)$, with
$$
m_t=C_tm_{t-1}+R_tB_t'\widetilde{V}_te_t
$$
and
$$
P_t=R_t-R_tB_t'\widetilde{V}_tB_tR_t+(e_t'\otimes R_tB_t')G_p
\widetilde{\widetilde{V}}_tG_p'(e_t\otimes B_tR_t),
$$
where
$$
\widetilde{V}_t=(B_tR_tB_t'+\widehat{V}_t)^{-1}\quad\textrm{and}
\quad \widetilde{\widetilde{V}}_t=\frac{K_0}{\eta_t}.
$$
From $\eta_t=\eta_{t-1}+1$ it follows that as
$\lim_{t\rightarrow\infty}\eta_t=\infty$ it is
$\lim_{t\rightarrow\infty}\widetilde{\widetilde{V}}_t=0$ and so for
large $t$ the posterior covariance matrix $P_t$ can be approximated
by $P_t\approx R_t-R_tB_t'\widetilde{V}_tB_tR_t$. This can motivate
computational savings, since there is no need to perform
calculations involving Kronecker products.

\section{Numerical illustrations}\label{data}

In this section we give two numerical examples of the state space
model considered in Section \ref{s4s4atseries}.

\subsection{A simulation study}\label{data1}

We simulate 1000 bivariate time series under 3 state space models
and we compare the performance of the proposed model of Section
\ref{s4s4atseries} (referred here as DLM1), of the MV-DLM discussed
in \ref{s4s4atseries} (referred here as DLM2) and of the general
multivariate dynamic linear model (referred here as DLM3). Let
$Y_t=[Y_{1t}~Y_{2t}]'$ be a bivariate time series. In the first
state space model we simulate 1000 bivariate time series from the
model
\begin{equation}\label{model1}
Y_t = \left[
\begin{array}{cc} 1 & 0 \\ 0 & 1\end{array}\right]
X_t + \epsilon_t,\quad X_t=\left[\begin{array}{cc} 1 & 0\\ 0 &
1\end{array} \right] X_{t-1}+\omega_t,\quad \epsilon_t\sim
\mathcal{N}_2(0,V),\quad \omega_t\sim\mathcal{N}_2(0,I_2),
\end{equation}
where $X_t$ is a bivariate state vector and the remaining components
are as in Section \ref{s4s4atseries}. Initially we assume that
$X_0\sim\mathcal{N}_2(0,I_2)$ and the covariance matrix $V$ is
$$
V=(V_{ij})_{i,j=1,2}=\left[\begin{array}{cc} 1 & 2\\ 2 &
5\end{array}\right],
$$
which means that the variables $Y_{1t}$ and $Y_{2t}$ are highly
correlated. The generated time series $\{Y_t\}$ comprise two local
level components, namely $\{Y_{1t}\}$ and $\{Y_{2t}\}$. We note that
DLM3 is the correct model, since it is used to generate the 1000
time series.

In the second state space model we simulate 1000 time series from
the model
$$
Y_t = \left[
\begin{array}{cc} 1 & 0 \\ 0 & 1\end{array}\right]
X_t + \epsilon_t,\quad X_t=\left[\begin{array}{cc} 1 & 1\\ 0 &
1\end{array} \right] X_{t-1}+\omega_t,\quad \epsilon_t\sim
\mathcal{N}_2(0,V),\quad \omega_t\sim\mathcal{N}_2(0,I_2),
$$
and the remaining components are as in (\ref{model1}). The generated
time series from this model are time series comprising $\{Y_{1t}\}$
as a local level component and $\{Y_{2t}\}$ as a linear trend
component.

Finally, in the third state space model, we simulate 1000 time
series from the model
\begin{equation}\label{model3}
Y_t = \left[
\begin{array}{ccc} 1 & 0 & 0 \\ 0 & 1 & 0\end{array}\right]
X_t + \epsilon_t,\quad X_t=\left[\begin{array}{ccc} 1 & 0 & 0\\ 0 &
\cos(\pi/6) & \sin(\pi/6) \\ 0 & -\sin(\pi/6) & \cos(\pi/6)
\end{array} \right] X_{t-1}+\omega_t,
\end{equation}
where $\epsilon_t\sim \mathcal{N}_2(0,V)$,
$\omega_t\sim\mathcal{N}_3(0,I_3)$ and here $X_t$ is a trivariate
state vector with initial distribution $X_0\sim
\mathcal{N}_3(0,I_3)$ and the remaining components of the model are
as in (\ref{model1}). The generated time series from this model are
bivariate time series comprising $\{Y_{1t}\}$ as a local level
component and $\{Y_{2t}\}$ as a seasonal component with period
$\pi/3$. Such seasonal time series appear frequently (Ameen and
Harrison, 1984; Godolphin, 2001; Harvey, 2004).

\begin{table}
\caption{Performance of the PSPP dynamic model (DLM1), MV-DLM (DLM2)
and the general bivariate dynamic model (DLM3) over 1000 simulated
time series of two local level components (LL), one local level and
one linear trend component (LT) and one local level and one seasonal
component (LS). Shown are the average (over all 1000 simulated
series) values of the mean square standard error (MSSE), of the mean
square error (MSE), of the mean absolute error (MAE) and of the mean
error (ME).}\label{table1}
\begin{center}
\begin{tabular}{|cc|cc|cc|cc|cc|}\hline
type & model & MSSE & & MSE & & MAE & & ME & \\ & & $y_{1t}$ &
$y_{2t}$
& $y_{1t}$ & $y_{2t}$ & $y_{1t}$ & $y_{2t}$ & $y_{1t}$ & $y_{2t}$ \\
\hline

LL & DLM1 & 0.905 & 1.045 & 2.536 & 7.975 & 1.521 & 2.249 & -0.049 &
-0.022 \\ & DLM2 & 1.009 & 1.075 & 2.556 & 8.635 & 1.259 & 2.348 &
0.012 & -0.004 \\ & DLM3 & 0.998 & 1.022 & 2.342 & 7.894 & 1.208 &
2.238 & 0.013 & 0.008 \\ \hline

LT & DLM1 & 0.913 & 1.057 & 3.407 & 13.017 & 1.399 & 2.784 & -0.157
& -0.276 \\ & DLM2 & 1.113 & 1.075 & 3.835 & 16.105 & 1.552 & 3.170
& -0.003 & -0.106 \\ & DLM3 & 0.996 & 0.993 & 2.569 & 11.221 & 1.274
& 2.614 & -0.093 & -0.320 \\ \hline

LS & DLM1 & 1.054 & 0.953 & 2.373 & 7.897 & 1.228 & 2.235 & 0.015 &
0.119 \\ & DLM2 & 1.186 & 2.829 & 2.450 & 200.963 & 1.259 & 10.755 &
-0.006 & 0.057 \\ & DLM3 & 0.982 & 0.994 & 2.361 & 7.856 & 1.224 &
2.218 & 0.017 & 0.112 \\
\hline
\end{tabular}
\end{center}
\end{table}

\begin{table}
\caption{Performance of estimators of the covariance matrix
$V=(V_{ij})_{i,j=1,2}$, produced by the PSPP dynamic model (DLM1)
and the MV-DLM (DLM2). Shown are the average (over all 1000
simulated series; see Table \ref{table1}) values of each estimator
for times $t=100$, $t=200$ and $t=500$.}\label{table2}
\begin{center}
\begin{tabular}{|cc|cc|cc|cc|}\hline
type & $V=(V_{ij})_{ij=1,2}$ & DLM1 & DLM2 & DLM1 & DLM2 & DLM1 & DLM2 \\
& & $t=100$ & & $t=200$ & & $t=500$ & \\ \hline
LL & $V_{11}=1$ & 1.347 & 0.961 & 1.072 & 0.954 & 0.988 & 0.974 \\
& $V_{12}=2$ & 2.352 & 1.047 & 1.792 & 0.914 & 2.087 & 1.113 \\
& $V_{22}=5$ & 5.846 & 3.407 & 4.332 & 2.874 & 5.215 & 3.290 \\
\hline

LT & $V_{11}=1$ & 2.087 & 0.475 & 1.599 & 0.647 & 1.210 & 0.678 \\
& $V_{12}=2$ & 3.169 & 0.463 & 2.375 & 0.721 & 2.217 & 0.802 \\
& $V_{22}=5$ & 6.200 & 2.509 & 4.627 & 2.718 & 5.043 & 2.851 \\
\hline

LS & $V_{11}=1$ & 0.627 & 0.729 & 0.782 & 0.851 & 0.960 & 0.955 \\
& $V_{12}=2$ & 1.497 & 0.887 & 1.674 & 0.901 & 1.872 & 0.907 \\
& $V_{22}=5$ & 4.084 & 3.548 & 4.104 & 11.439 & 4.626 & 76.609 \\
\hline
\end{tabular}
\end{center}
\end{table}

Tables \ref{table1} and \ref{table2} show the results. In Table
\ref{table1} the three state space models (DLM1, DLM2 and DLM3) are
compared via the mean of squared standard 1-step forecast errors
(MSSE), the mean square 1-step forecast error (MSE), the mean
absolute 1-step forecast error (MAE) and the mean 1-step forecast
error (ME). For a discussion of these measures of goodness of fit,
known also as measures of forecast accuracy, the reader is referred
to general time series textbooks, see e.g. Reinsel (1997) and Durbin
and Koopman (2001). In a Bayesian flavour, goodness of fit may be
measured via comparisons with MCMC methods (which provide the
correct posterior destinies) or via Bayes monitoring systems, such
as those using Bayes factors; see West and Harrison (1997).

Section \ref{s4s4atseries} details how the MSSE has been calculated.
Out of the three models we know that DLM3 is the correct model,
since it is used to generate the time series data. For the local
level components (LL), both DLM1 and DLM2 put good performances with
the DLM2 having the edge and being closer to the performance of the
DLM3. This is expected, since as we noted in Section
\ref{s4s4atseries} when both time series components $Y_{1t}$ and
$Y_{2t}$ are similar the MV-DLM (DLM2) has good performance.
However, in the LT and LS time series components, where the two
series $Y_{1t}$ and $Y_{2t}$ in each case, are not similar, we
expect that the DLM2 will not perform very well. This is indeed
confirmed by our simulations, for which Table \ref{table1} clearly
shows that the performance of DLM1 is better than that of the DLM2.
For example, for the LS component, the MSSE of the DLM1 is
$[1.054~0.953]'$, which is close to $[1~1]'$, while the respective
MSSE of the DLM2 is $[1.186~2.829]'$.

Table \ref{table2} looks at the accuracy of the estimation of the
covariance matrix $V$, for each model. For the LL components
$V_{11}=1$ is estimated better from DLM2, although for $t=500$ the
estimate from DLM1 is slightly better. For $V_{12}=2$ and
$V_{22}=5$, DLM2 produces poor results as compared to the DLM1. For
example, even for $t=500$ the estimate of $V_{22}=5$ of the DLM2 is
only 3.290, while the estimate of the DLM1 is 5.215. This phenomenon
appears to be magnified when looking at the LT and LS components,
where for example even at $t=500$ for the LT the estimate of
$V_{12}=2$ and for the LS the estimate of $V_{22}=5$ are 0.802 and
76.609, while the respective estimates from the DLM1 are 2.217 and
4.626. The conclusion is that the DLM1 produces a consistent
estimation behaviour over a wide range of bivariate time series,
while the DLM2 (matrix-variate DLM) produces acceptable performance
when the component time series are all similar.

It should be stated here that, the matrix-variate state space models
of Harvey (1986) produce a similar performance with the DLM2; Harvey
(1989) calls the above matrix-variate models as 'seemingly unrelated
time series models' to indicate the similarity of the component time
series. The models of Triantafyllopoulos and Pikoulas (2002) and
Triantafyllopoulos (2006a, 2006b) and of many other authors (see the
citations in Harvey, 1989; West and Harrison, 1997; Durbin and
Koopman, 2001) can only accommodate for regression type state space
models and for local level models. More general structures, such
that of model (\ref{model3}) can only be dealt with via
simulation-based methods, such as Monte Carlo simulation. For
high-dimensional dynamical systems and in particular for observation
covariance estimation, the proposal of PSPP state space model of
Section \ref{s4s4atseries} offers a fast and reliable approximate
estimation procedure, which can be applied for a wide range of time
series.

\subsection{The US investment and business inventory data}\label{data2}

We consider US investment and change in business inventory data,
which are deseasonalised and they are measured quarterly into a
bivariate time series (variable $y_{1t}$: US investment data and
variable $y_{2t}$: US change in inventory data) over the period
1947-1971. The data are fully described and tabulated in L\"utkepohl
(1993) and Reinsel (1997, Appendix A). The data are plotted in
Figure \ref{fig1} with their forecasts, which are generated by
fitting the linear trend PSPP state space model

\begin{figure}[t]
 \epsfig{file=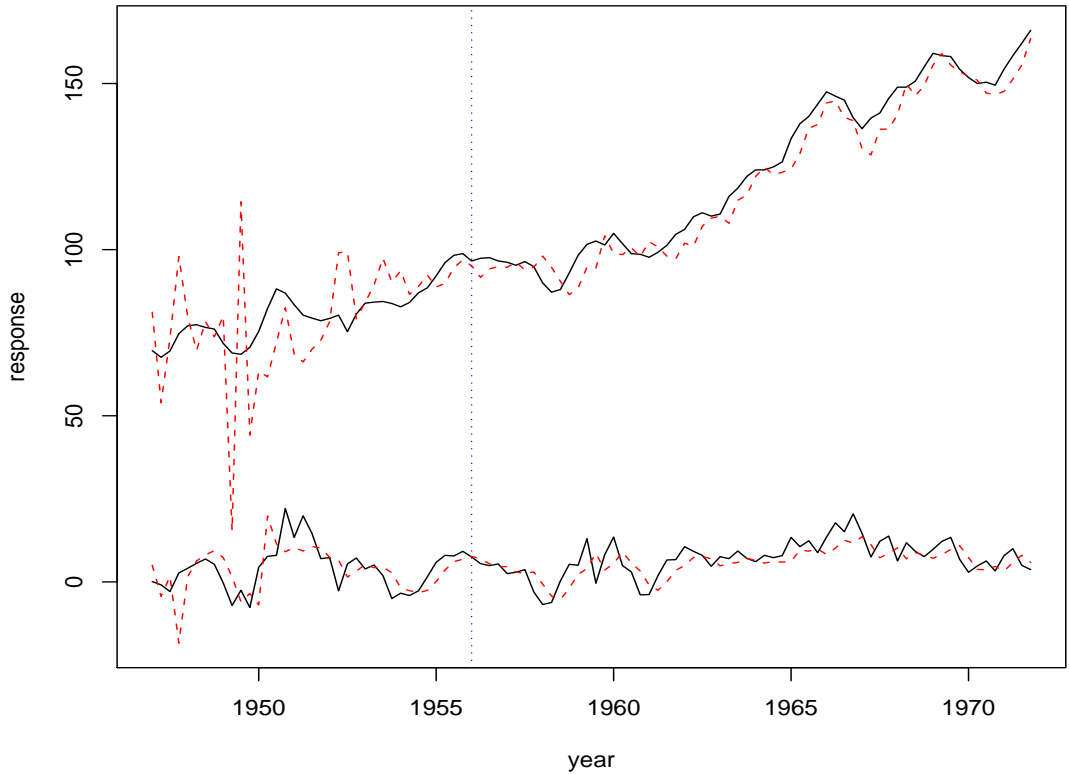, height=12cm, width=15cm}
 \caption{US Investment and Change in Inventory time series $y_t=[y_{1t}~y_{2t}]'$
 with its 1-step forecast mean $f_t=[f_{1t}~f_{2t}]'$. The top solid line
 shows $y_{1t}$ and the bottom solid line shows $y_{2t}$; the top
 dashed line shows $f_{1t}$ and the bottom dashed line shows $f_{2t}$.}\label{fig1}
\end{figure}

\begin{equation}\label{model4}
Y_t = \left[\begin{array}{cc} 1 & 0 \\ 0 & 1 \end{array}\right]
X_t + \epsilon_t,\quad X_t=\left[\begin{array}{cc} 1 & 1 \\ 0 &
1\end{array}\right] X_{t-1} + \omega_t, \quad \epsilon_t\sim
(0,V), \quad \omega_t\sim (0,W_t),
\end{equation}
where here we have not specified the distributions of $\epsilon_t$
and $\omega_t$ as normal and we have replaced the time-invariant $W$
of Section \ref{s4s4atseries} with a time-dependent $W_t$. Model
(\ref{model4}) is a PSPP linear trend state space model, for which
we choose the priors $m_0=[80.622~4.047]'$ (mean of
$[Y_{1t}~Y_{2t}]'$ for $t=1941-1956$, indicated in Figure \ref{fig1}
by the vertical line), $P_0=1000 I_2$ (weakly informative prior
covariance matrix or low precision $P_0^{-1}\approx 0$) and
$$
V_0=\left[\begin{array}{cc} 66.403 & 22.239 \\ 22.239 & 46.547
\end{array}\right],
$$
which is taken as the sample covariance matrix of $Y_{1t}$ and
$Y_{2t}$, for the time period 1941-1955. The covariance matrix $W_t$
measures the durability and the stability of the change or evolution
of the states $X_t$. Here we specify $W_t$ with 2 discount factors,
$\delta_1$ and $\delta_2$, as follows. With $G$ as the evolution
matrix of $X_t$ and $\Delta$ the discount matrix
$$
G=\left[\begin{array}{cc} 1 & 1\\ 0 & 1\end{array}\right], \quad
\Delta=\left[\begin{array}{cc} \delta_1 & 0 \\ 0 &
\delta_2\end{array}\right],
$$
we have
$$
W_t=\Delta^{-1/2} GP_{t-1}G' \Delta^{-1/2} -GP_{t-1}G',
$$
where $R_t$ in the recursions of Section \ref{s4s4atseries} is
replaced by $R_t=GP_{t-1}G'+W_t$. Although this discounting
specification is not advocated by West and Harrison (1997, \S6.4),
it has been successfully used (McKenzie, 1974, 1976; Abraham and
Ledolter, 1983, Chapter 7; Ameen and Harrison, 1985; Goodwin, 1997).

The values of $\delta_1$ and $\delta_2$ are chosen by
experimentation. The above model gave the best result with a
combination of discount factors $\delta_1=0.2$ and $\delta_2=0.4$.
The performance measures were $\textrm{MSSE}=[1.001~1.101]'$,
$\textrm{MSE}=[111.165~66.941]'$, $\textrm{MAE}=[6.718~6.855]'$ and
$\textrm{ME}=[0.076~1.725]'$. Other combinations of $\delta_1$ and
$\delta_2$ yield less accurate results, with the usual effect that
one of the two series $y_{1t}$ and $y_{2t}$ is accurately predicted,
but the other one series is badly predicted. This problem certainly
arises when $\delta_1=\delta_2$, which clearly indicates the need of
multiple discounting. Also, Figure \ref{fig2} plots the observation
variance, covariance and correlation estimates in the time period
1956-1970. From this plot we observe that the variability of the
change in inventory time series component $y_{2t}$ is much larger
than that of $y_{1t}$. The estimate of the observation correlation
indicates the high cross-correlation between the two series.

\begin{figure}[t]
 \epsfig{file=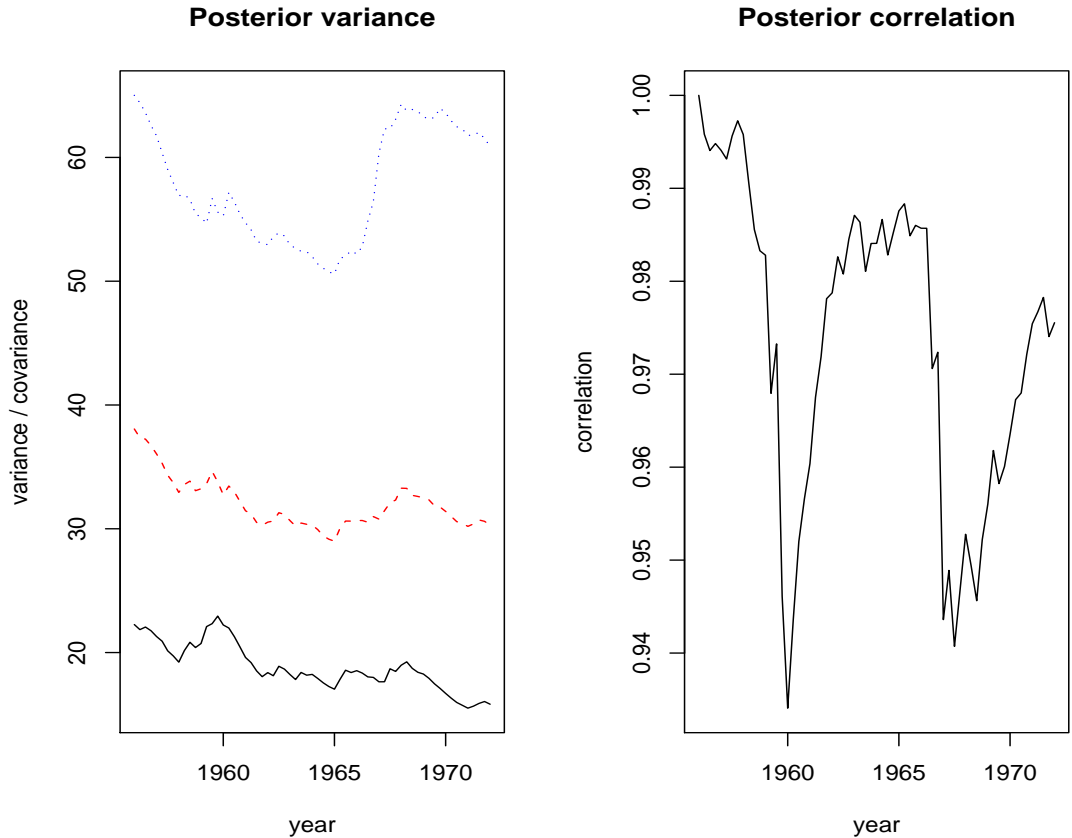, height=12cm, width=15cm}
 \caption{Posterior estimates of the observation covariance matrix $V=(V_{ij})_{i,j=1,2}$ and estimates of the correlation
 $\rho=V_{12}/\sqrt{V_{11}V_{22}}$. In the left panel graph, shown
 are: estimate of the variance $V_{11}$ (solid line), estimate of the variance $V_{12}$ (dashed line), and estimate
 of the variance $V_{22}$ (dotted line). In the right panel graph, the solid line shows the estimate of $\rho$.}\label{fig2}
\end{figure}

\section{Discussion}\label{s5}

This paper develops a method for approximating the first two moments
of the posterior distribution in Bayesian inference. This work is
particularly appealing in regression and time series problems when
the response and parameter distributions are only partially
specified by means and variances. Our partially specified prior
posterior (PSPP) models offer an approximation to prior/posterior
updating, which is appropriate for sequential application, such as
in time series analysis. The similarities and differences with Bayes
linear methods are indicated and, although the authors do believe
that Bayes linear methods offer a great statistical tool, it is
pointed out that in some problems, considered in this paper and in
particular for time series data, the PSPP modelling approach can
offer advantages as opposed to Bayes linear methods.

PSPP models are developed having in mind Bayesian inference for
multivariate state space models when the observation covariance
matrix is unknown and it is subject to estimation. This paper
outlines the deficiency of the existing methods to tackle this
problem and it is shown empirically that, for a class of important
time series data, including local level, linear trend and seasonal
components, PSPP generates much more accurate and reliable posterior
estimators, which are remarkably fast and applicable to a wide range
of time series data. US investment and change in inventory data are
used to illustrate the capabilities of the PSPP state space models.

Given the similarities of the PSPP with Bayes linear methods, it is
believed that the applicability of the PSPP approach goes beyond the
examples considered in this paper. For example one area that is only
slightly touched, is inference for data following non-normal
distributions, other than the multivariate $t$, the inverted
multivariate $t$, and the Wishart distributions. In this sense a
more detailed comparison of PSPP with Bayes linear methods and in
particular with Bayes linear kinematics (Goldstein and Shaw, 2004),
should shed more light on the performance of PSPP. It is our purpose
to consider such comparisons in a future paper.

\section*{Acknowledgements}

The authors are grateful to the Statistics Department at Warwick
University, where this work was initiated. We are grateful to three
referees for providing helpful comments.

\section*{Appendix}

\begin{proof}[Proof of Theorem \ref{th1}]

$(\Longrightarrow)$ By hypothesis
$\E(X|Y)=\mu_x+A_{xy}(Y-\mu_y)\Rightarrow
\E(X-A_{xy}Y|Y)=\mu_x-A_{xy}\mu_y=$ constant. Furthermore
$\var(X|Y)=E\{(X-\mu_x-A_{xy}(Y-\mu_y))(X-\mu_x-A_{xy}(Y-\mu_y))'|Y\}=
\Sigma_x-A_{xy}\Sigma_yA_{xy}'=$ constant $\Rightarrow
\var(X-A_{xy}Y|Y)=\var(X|Y)=$ constant. It follows that
$X-A_{xy}Y\bot_2Y$.

$(\Longleftarrow)$ The assumption $X-A_{xy}Y\bot_2 Y$ implies that
$\E(X-A_{xy}Y|Y)=\mu$ constant $\Rightarrow \E(X|Y)=A_{xy}Y+\mu$,
which is a linear function of $Y$. Given that $\E(X|Y)$ minimizes
the quadratic prior expected risk and $\mu_x+A_{xy}(Y-\mu_y)$
minimizes this risk among all linear estimators, it follows that
$\E(X|Y)=\mu_x+A_{xy}(Y-\mu_y)$.
\end{proof}

\end{document}